\newtheorem{defn}{Definition}[section]
\newtheorem{theorem}{Theorem}[section]
\newtheorem{exm}{Example}[section]%\usepackage{graphics} % for pdf, bitmapped graphics files
\title{
Market-Based Power Allocation for a Differentially Priced FDMA System
}
\author{\IEEEauthorblockN{Mohammad Hassan Lotfi}
\IEEEauthorblockA{ESE Department\\
University of Pennsylvania\\
Philadelphia, PA, USA, 19104\\
Email: lotfm@seas.upenn.edu}
\and
\IEEEauthorblockN{George Kesidis}
\IEEEauthorblockA{EE and CS\&E Departments\\
Pennsylvania State University\\
University Park, PA, USA, 16803\\
Email: kesidis@engr.psu.edu}
\and
\IEEEauthorblockN{Saswati Sarkar}
\IEEEauthorblockA{ESE Department\\
University of Pennsylvania\\
Philadelphia, PA, USA, 19104\\
Email: swati@seas.upenn.edu}
\thanks{This work has been supported by  NSF grants CNS 1116626, 
CNS 1115547, CNS 1217730.}}
\begin{document}

\maketitle
\thispagestyle{empty}
\pagestyle{empty}

%%%%%%%%%%%%%%%%%%%%%%%%%%%%%%%%%%%%%%%%%%%%%%%%%%%%%%%%%%%%%%%%%%%%%%%%%%%%%%%%
\begin{abstract}
In this paper, we study the problem of differential pricing and QoS assignment by a broadband data provider. In our model, the broadband data provider decides on the power allocated to an end-user not only based on parameters of the transmission medium, but also based on the price the user is willing to pay. In addition, end-users bid the price that they are willing to pay to the Base Station (BS) based on their channel condition, the throughput they require, and their belief about other users' parameters. We will characterize the optimum power allocation by the BS which turns out to be a modification of the solution to the well-known water-filling problem. We also characterize the optimum bidding strategy of end-users using the belief of each user about the cell condition.
\end{abstract}

%%%%%%%%%%%%%%%%%%%%%%%%%%%%%%%%%%%%%%%%%%%%%%%%%%%%%%%%%%%%%%%%%%%%%%%%%%%%%%%%
\section{Introduction}\label{section:introduction}

The exponential growth of the demand for broadband data in recent years and the high cost of network expansion led broadband data providers to revise their policies and prices imposed on content providers and end-users \cite{SoumyaSurvey}, \cite{nonneutrality}. An evidence of such trend is the increasing incentive of broadband data providers to change their flat-rate pricing scheme with a more complex one to control the congestion on the network and to increase their profit. 

% An example of these new pricing policies is a differential pricing scheme in which data is sold to different users at different prices. In general, price differentiation can increase the profitability of a firm \cite{pricing_ECON}.

% example of these new pricing policies is Time-Dependent Pricing for Mobile Data (TUBE) \cite{SoumyaTUBE}. In this pricing scheme, ISP computes the time dependent price of data so as to reduce the congestion in peak times. Authors have run a pilot TUBE trial with 50 iPhone or iPad 3G data users.   
%Another example of these new pricing policies is a differential pricing scheme in which data is sold to different users at different prices. In general, price differentiation can increase the profitability of a firm \cite{pricing_ECON}.

In this paper, we study the problem of differential pricing and QoS assignment by a broadband data provider. In our model, the broadband provider decides on the resources allocated to an end-user, and subsequently the QoS seen by that user, not only based on parameters of the transmission medium, but also based on the price the user is willing to pay. An FDMA system with a single Base Station (BS) is considered. The BS has a limited downlink power budget, and is willing to allocate the available power such that to maximize the gross profit. An end-user bids the price that she is willing to pay to the BS based on her own channel condition, the throughput she requires, and her belief about other users' parameters. Subsequently, the BS decides on the amount of power she allocates to end-users based on the price they bid, their channel quality, and the throughput needed by users. 

Although the bidding scheme is not convenient to use by an end-user in everyday data usage, this scheme can be used by users with a bad channel quality and high valuation for data to increase their throughput through appropriately adjusting their price. In addition, this gives an end-user with a good channel quality the ability to use data for a cheaper price. Another application for this scheme is primary/secondary markets \cite{Gaurav}.    

In this paper, we will characterize the optimum power allocation by the BS. We will show that the solution to this problem is a modification of the solution to the well-known water-filling \cite{Boyd} problem. We also characterize the optimum bidding strategy of end-users using the belief of each user about the cell condition. Furthermore, we will investigate the relation between the communication parameters of a user and the price she pays and subsequently the allocated power to her using simulations.  The social welfare of end-users is also compared in two cases: (i) the mentioned differential QoS and pricing scheme and (ii) the flat-rate pricing scheme.

%The difference is that the flood level is increasing with respect to the price a user bids. In addition, there exists a saturation level for each user. 

% Simulations yield that the price bid by a user is non-decreasing with respect to the desired throughput, decreasing with respect to the channel quality, and increasing with respect to the valuation for data. The social welfare of end-users is also compared in two cases: (i) the mentioned differential QoS and pricing scheme and (ii) the flat-rate pricing scheme. Simulations yield that the difference between the social welfare of two cases is increasing with the range of user's valuation for data, and the difference is possibly positive. Thus, through differentiation, not only the BS increases  her gross profit, but also the social welfare of users is higher.  

The problem of resource allocation through pricing in wired and wireless networks has received a lot of attention \cite{uplink} - \cite{downlink2}. In most of the works about pricing-based power control in wireless networks, such as  \cite{uplink}-\cite{uplink2}, the problem of power control in \textit{uplink} of a cellular network is considered. However,  \cite{downlink} and \cite{downlink2} investigated the problem of downlink power allocation in a CDMA system. The closest work to ours is \cite{downlink2}. In \cite{downlink2}, authors investigated a bidding scheme in which users bid for a time-frame, and the BS allocates the time-frame based on prices and channel qualities. Authors focused on the macro-level view of the interaction between the BS and end-users, and do not involve wireless intricacies in the model. The decisions of users are considered to be made independent of each other and based on a demand function. They mainly focus on the optimal allocation and revenue given the price vector, and do not characterize the optimum bidding strategies of users. By characterizing the demand function, we capture the nature of end-user's optimization, and characterize both the optimum allocation and bidding strategies. 

Another distinction of our work with previous works is to consider more strategic users that only pay for the amount of data they need. In other words, users have a desired throughput that needs to be met, and they only pay for the data up to their desired throughput and not more. 

The paper is organized as follows. In Section \ref{section:SystemModel}, we describe the model and formulate a two stage optimization problem. In Section \ref{section:optimal}, we solve the optimization problems and characterize the optimum power allocation and bidding strategies. Simulation results are presented in Section \ref{section:simulation}. We conclude the paper and present future works in Section \ref{section:conclusion}.

\section{System Model and Problem Formulation}\label{section:SystemModel}

We consider an FDMA system with parallel broadcast Gaussian channels in a single cell with one Base Station (BS). The base station has a down-link power budget of $\Phi$. Without loss of generality, the total power budget is assumed to be one unit, i.e. $\Phi=1$, otherwise, simply normalize all power quantities, including noise, by $\Phi$. Consider a group of $n\geq 2$ mobile end-users indexed by $i$. The channel attenuation to the user $i$ is denoted by $h_i$, and each user $i$ has a desired throughput of $b_i$ bits/s. In order to have a non-trivial problem, suppose that the BS is overloaded with the demand, i.e., the power budget is not sufficient to satisfy the demand of all end-users. 

Users bid a price, $c_i$, that they are willing to pay for one unit of data to the BS. Subsequently, the BS decides on the fraction of power she will assign to user $i$ depending on the price this user bids ($c_i$), the channel quality, and the desired throughput ($b_i$). Thus, we formulate the problem as a two-stage optimization problem:

\textit{\textbf{(1)}} In the first stage, end-user $i$ decides to bid a price $c_i$ per bit of data to maximize her expected utility function:
\begin{equation}
U_i(c_i,\vec{c}_{-i})=(v_i-c_i)T_i(c_i,\vec{c}_{-i})
\end{equation}  
where $v_i>0$ is the valuation of end-user $i$ for one bit of data, and $T_i(c_i,c_{-i})$ is the expected throughput of $i$ when she bids the price $c_i$ per bit and other users bid the vector of price $\vec{c}_{-i}$. By Shannon formula, 

\begin{eqnarray}
T_i(c_i,\vec{c}_{-i})=\log_2(1+SNR_i)
\label{eqn:shannon}\\
SNR_i=\frac{h_iP_i^*}{N_i}=q_iP_i^*,
\nonumber
\end{eqnarray}
where $q:=\frac{h_i}{N_i}$ is the channel quality.  In addition, $N_i$ and $P^*_i$ are the power of noise for $i$ and the expected dedicated downlink power of the BS to end-user $i$, respectively. Note that in the Shannon formula, without loss of generality, the bandwidth is taken to be $1$. Furthermore, in \eqref{eqn:shannon}, $SNR_i$ is the signal to noise ratio of the user $i$. 

We assumed that each end-user is assigned a distinct frequency channel and the interference between channels are assumed to be negligible. In addition, $P_i$ is dependent on the price a user bids and is determined by the BS in the second stage of the optimization problem.   

In this paper, we assume that an end-user is aware of her own channel quality and throughput she needs, and holds a belief over other users' parameters.

\textit{\textbf{(2)}} In the second stage, the BS decides on the optimum power allocation knowing the channel quality of end-users, throughput they need, and the price they bid. The goal of the BS is to maximize her revenue. The revenue of the BS depends on the price a user pays per bit, $c_i$, times the number of bits of data she uses. Thus, the revenue of the BS is $\sum_i{c_i\log_2(1+SNR_i)}$. This expression should be maximize by the BS given the prices that end-users bid in the previous stage and the channel quality of users.  Therefore, the following optimization problem is solved by the BS:

\begin{eqnarray}
\min_{\vec{P}} {-\sum_i{c_i \log_2(1+q_iP_i}}) \label{equ:optobj}\\
\qquad  s.t. \qquad \qquad \qquad \qquad \qquad   &\nonumber \\ 
\log_2(1+q_iP_i) \leq b_i \qquad \forall i \label{equ:maxthr} \label{equ:optcns1}\\
P_i\geq 0 \qquad \forall i\label{equ:optcns2}\\
\sum_i{P_i}=1 \qquad \quad &\label{equ:optcns3}
\end{eqnarray}   

The constraint \eqref{equ:optcns1} reflects the desired throughput of a user. Thus, in the optimum allocation, the BS provides at most $b_i$ bits for $i$. The constraint \eqref{equ:optcns3} resulted from the power budget of the BS and the assumption that the system is overloaded.

In the next section, we characterize the optimum power allocation strategy for the base station and the optimum bidding strategy for end-users. 

\section{Optimum Allocation and Bidding Strategies}\label{section:optimal}
We proceed to solve the problem in a reverse order. First we start with the BS problem (Section \ref{section:BTSproblem}) for allocating the power. Then we characterize the optimum bidding strategies for end-users using the optimum allocation strategy obtained by solving the BS problem (Section \ref{section:userproblem}). 

\subsection{BS Problem}\label{section:BTSproblem}
The optimization problem \eqref{equ:optobj}-\eqref{equ:optcns3} is a modified version of the conventional water-filling problem. In the conventional water-filling the objective function is merely the sum of throughputs, while in \eqref{equ:optobj} the objective function is the weighted sum of the throughput of end-users in which the weights are the price per bit that a user is willing to pay. In addition, \eqref{equ:optcns1} is an extra constraint to ensure that the allocated throughput to a user does not exceed the demand of that user. We will prove that the solution to this optimization problem is also a modified version of the conventional solution to the water-filling problem.   

Similar to the water-filling problem, the BS problem is a convex programming problem. Thus KKT conditions provides the necessary and sufficient conditions for an optimal solution, i.e. $\vec{P}^*$. The stationarity condition in KKT is as follows: 

\begin{equation}
\frac{\gamma_i^*-c_i}{\frac{1}{q_i}+P^*_i}-\lambda^*_i+\eta^*=0 \qquad \forall i \label{eqn:lag}
\end{equation}

%\begin{eqnarray}
%\log_2(1+q_iP^*_i) \leq b_i \qquad \forall i\\
%\gamma_i^*\geq 0 \qquad \forall i\\
%\gamma_i^*(\log_2(1+q_iP^*_i) - b_i)=0 \qquad \forall i\\
%P_i^*\geq 0 \qquad \forall i \label{eqn:positivep}\\
%\lambda_i ^*\geq 0 \qquad \forall i\\
%\lambda^*_i P_i^\star=0 \qquad \forall i \label{eqn:slackp}\\
%\sum_i{P^*_i}=1 \quad \qquad & \label{eqn:budget}\\
%\frac{\gamma_i^*-c_i}{\frac{1}{q_i}+P^*_i}-\lambda^*_i+\eta^*=0 \qquad %\forall i \label{eqn:lag}
%\end{eqnarray} 
where $\gamma_i$, $\lambda_i$, and $\eta$ are Lagrange multipliers associated with \eqref{equ:optcns1}, \eqref{equ:optcns2}, and \eqref{equ:optcns3}, respectively. The parameter $\lambda^*_i$ works similar to a slack variable. Thus $\frac{c_i-\gamma_i^*}{\frac{1}{q_i}+P^*_i}\leq \eta^*$. If $\frac{c_i-\gamma_i^*}{\frac{1}{q_i}} > \eta^*$, then $P^*_i>0$, and from complementary slackness, $\lambda_i^*=0$. Thus $\frac{c_i-\gamma_i^*}{\frac{1}{q_i}+P^*_i}=\eta^*$. On the other hand, if $\frac{c_i-\gamma_i^*}{\frac{1}{q_i}} < \eta^*$, then $P_i^*=0$. Therefore, 

\begin{equation}\label{eq:optimum}
P_i^*=\max\{0,\frac{c_i-\gamma_i^*}{\eta^*}-\frac{1}{q_i}\}=(\frac{c_i-\gamma_i^*}{\eta^*}-\frac{1}{q_i})^+
\end{equation}
which is dependent on dual parameters $\eta^*$ and $\gamma_i^*$.  Thus,
\\
\begin{theorem}
The solution to the minimization problem \eqref{equ:optobj}-\eqref{equ:optcns3} is  $P_i^*=(\frac{c_i-\gamma_i^*}{\eta^*}-\frac{1}{q_i})^+$ where $\eta^*$ and $\gamma_i^*$ are the optimum dual parameters.  \\
\end{theorem}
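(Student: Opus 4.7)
The plan is to invoke the KKT conditions, which will be both necessary and sufficient once convexity of the program is confirmed. I would first note that the throughput cap \eqref{equ:optcns1} is equivalent to the linear constraint $P_i \leq (2^{b_i}-1)/q_i$, so together with \eqref{equ:optcns2} and \eqref{equ:optcns3} the feasible set is a polyhedron, whose nonemptiness follows from the overloaded-system assumption. The objective is a sum of convex functions $-c_i\log_2(1+q_iP_i)$ with $c_i \geq 0$, hence convex. Because every constraint is affine, Slater's condition is automatic, strong duality holds, and any KKT point is a global optimum.

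Next I would form the Lagrangian with multipliers $\gamma_i \geq 0$ for \eqref{equ:optcns1}, $\lambda_i \geq 0$ for \eqref{equ:optcns2}, and $\eta \in \mathbb{R}$ for \eqref{equ:optcns3}, differentiate in $P_i$, and verify that the stationarity condition collapses (after absorbing the factor $\ln 2$ into the multipliers) to the relation \eqref{eqn:lag} stated in the paper. From here I would carry out the case split the excerpt foreshadows: if $P_i^* > 0$, complementary slackness against \eqref{equ:optcns2} forces $\lambda_i^* = 0$, and \eqref{eqn:lag} yields $P_i^* = (c_i-\gamma_i^*)/\eta^* - 1/q_i > 0$; if $P_i^* = 0$, then \eqref{eqn:lag} evaluated at zero together with $\lambda_i^* \geq 0$ gives $q_i(c_i-\gamma_i^*) \leq \eta^*$, which rearranges to $(c_i-\gamma_i^*)/\eta^* - 1/q_i \leq 0$. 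The two cases then merge into the positive-part expression claimed.

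The step that needs the most care is showing $\eta^* > 0$, so that dividing through by $\eta^*$ is legitimate and the case split is sharp. I would argue this from the overloaded regime: since \eqref{equ:optcns3} binds and the optimum has at least one user with $P_i^* > 0$ whose cap \eqref{equ:optcns1} is inactive (otherwise the total budget could not be expended), a marginal relaxation of the budget strictly increases revenue for that user; by the sensitivity interpretation of the Lagrange multiplier this forces $\eta^* > 0$. Once $\eta^* > 0$ is secured, the closed-form expression for $P_i^*$ follows immediately, and primal feasibility ($\sum_i P_i^* = 1$ and $P_i^* \leq (2^{b_i}-1)/q_i$) together with complementary slackness pins down $\eta^*$ and each $\gamma_i^*$.
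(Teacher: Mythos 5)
Your proposal is correct and follows essentially the same route as the paper's proof: convexity of \eqref{equ:optobj}--\eqref{equ:optcns3} makes the KKT conditions necessary and sufficient, and the complementary-slackness case split on $P_i^*>0$ versus $P_i^*=0$ collapses the stationarity condition \eqref{eqn:lag} into the positive-part formula, with your checks of Slater's condition and of $\eta^*>0$ being extra care the paper leaves tacit. One small repair to the $\eta^*>0$ step: the witness need only be a user whose cap \eqref{equ:optcns1} is inactive (overload guarantees one exists, since all caps active would force $\sum_i P_i^* > 1$), not necessarily one with $P_i^*>0$ --- a knife-edge allocation could exhaust the budget with every served user saturated, yet stationarity for any unsaturated user (for whom $\gamma_i^*=0$ and $c_i>0$) still gives $\eta^* = q_i c_i + \lambda_i^* > 0$ when $P_i^*=0$, or $\eta^* = c_i/(1/q_i + P_i^*) > 0$ when $P_i^*>0$.
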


\textit{\textbf{Discussion:}}The solution, $P_i^*=(\frac{c_i}{\eta^*}-(\frac{\gamma_i^*}{\eta^*}+\frac{1}{q_i}))^+$, is slightly different from water-filling, in the sense that the flood level ($\frac{c_i}{\eta^*}$) is different for each user, and is dependent on $c_i$, i.e. the price the user pays. 

Either end-user $i$ receives the desired throughput, i.e., $\log_2(1+q_iP^*_i)= b_i$, or not, i.e., $\log_2(1+q_iP^*_i)< b_i$. In the latter case, $\gamma^*_i=0$, and the throughput of the user $i$ is $\log_2(1+q_i(\frac{c_i}{\eta^*}-\frac{1}{q_i})^+)$. Thus in this case, the throughput is either zero if $\frac{c_i}{\eta^*}-\frac{1}{q_i}\leq 0$, or $\log_2(q_i\frac{c_i}{\eta^*})$ if $\frac{c_i}{\eta^*}-\frac{1}{q_i}>0$. Therefore the throughput of the end-user is a non-decreasing function of $c_i$ and $q_i$:
\begin{equation}
T(c_i,\vec{c}_{-i}) =
\left\{
	\begin{array}{ll}
		\min\{\log_2(q_i\frac{c_i}{\eta^*}),b_i\}  & \mbox{if } c_i \geq \frac{\eta^*}{q_i} \\
		0 & \mbox{if }  c_i < \frac{\eta^*}{q_i}
	\end{array}
\right.
\nonumber
\end{equation}

\textit{\textbf{Discussion:}} There exists a saturation level for each user. In the saturation level, $\log_2(1+q_iP^*_i)= b_i$. In other words, the saturated level of the allocated power occurs when the desired throughput is met. 

Note that the dual parameter $\eta^*$ \eqref{eqn:lag} depends on all players' prices (including $i$) and their channel quality. This parameter is important in determining the flood level. Intuitively from \eqref{equ:optcns3} and \eqref{eq:optimum}, we can say that a higher $\eta^*$ is associated with a more congested cell, users that are willing to pay more for data, or lower quality of channels\footnote{This is not a proof. In the actual proof, one should note that $\gamma_i^*$ is changing.}. From this point on, we call $\eta^*$  the \emph{cell condition}. In section \ref{section:userproblem}, we will observe that the belief of end-users about this parameter is important in their decision making about the price they bid. 

\subsection{End-User Problem}\label{section:userproblem}
Now, consider the bidding problem for end-users. We will characterize the optimum solution(s) for the problem.

In our setting, users want to bid a price to maximize their utility function which is a decreasing function of the price they bid and an increasing function of their expected throughput: 

\begin{equation}\label{eqn:userproblem}
\begin{aligned}
\max_{c_i> 0} &U_i(c_i)=\\
&\max_{c_i> 0}  \bigg{(}(v_i-c_i)\min\{b_i,\max\{0,\log_2({q_i\frac{c_i}{{\eta}_i(c_i)}})\}\}\bigg{)}
\end{aligned}
\end{equation}
%If $0\leq \log_2({q_i\frac{c_i}{{\eta}_i(c_i)}})<b_i$, then the expected allocated power to user $i$ is $P^{exp}_i=\frac{c_i}{\eta_i(c_i)}-\frac{1}{q_i}$. 
In \eqref{eqn:userproblem}, $\log_2({q_i\frac{c_i}{{\eta}_i(c_i)}})$ is the expected throughput resulted from the optimum power allocation done by the BS in the previous section, and ${\eta}_i(c_i)$ is the belief of player $i$ about the $\eta^*$ in \eqref{eqn:lag}. Note that users do not know the actual value of $\eta^*$, and the belief of a user about the cell condition is dependent on the price paid by that user. We will later discuss more about the belief function.

Bidding the price $c_i=0$ yields a throughput of zero. Therefore the optimization is on prices greater than zero, and the price that user $i$ chooses is in the set $(0,\bar{c}_i]$, where $\bar{c}_i=\inf \{c_i: \log_2({q_i\frac{c_i}{{\eta}_i(c_i)}})=b_i\}$. This happens because\ $T_i(c_i,\vec{c}_{-i})\leq b_i$ for all $c_i$. Thus for $c_i>\bar{c}_i$, $U_i(c_i)<(v_i-\bar{c}_i)b_i=U_i(\bar{c_i})$. Therefore, player $i$ does not choose a price higher than $\bar{c}_i$.  In other words, $\bar{c}_i$ is the smallest price by which a user can expect to receive the full throughput she needs. In addition, by choosing $c_i=v_i$, a user can secure the payoff of zero. Thus, the optimum price should be in the interval of $(0,\min\{\bar{c}_i,v_i\}]$. We call the interval $(0,\min\{\bar{c}_i,v_i\}]$ the feasible price interval.  

Now, consider the maximization \eqref{eqn:userproblem} on the feasible interval:
\begin{equation}
\max_{c_i\in (0,\min\{\bar{c}_i,v_i\}]}{(v_i-c_i)\log_2({q_i\frac{c_i}{{\eta}_i(c_i)}})}
\end{equation}    

The optimum bidding strategy is found by using the first order optimality condition:
\begin{equation}\label{equ:diff}
\frac{dU_i(c_i)}{dc_i}=-\log_2({q_i\frac{c_i}{{\eta}_i(c_i)}})+(v_i-c_i)(\frac{1}{c_i}-\frac{\eta'_i(c_i)}{\eta_i(c_i)})=0
\end{equation}

 Let $A_i$ denote the set of solutions to the equation \eqref{equ:diff}. The set of candidate optimum solutions, $A^*_i$, is the set of prices that are in the interval  $(0,\min\{\bar{c}_i,v_i\}]$ and satisfy the second order condition. In other words, $A^*_i=\{c: c\in A_i, c\in (0,\min\{\bar{c}_i,v_i\}], \text{ and } \frac{d^2U_i}{dc^2_i}|_{c_i=c}<0 \}$. The optimum bidding price is:

\begin{equation}
 c^*_i=\text{arg} \max \{U_i(c): c\in A^*_i \text{ or } c=\min\{\bar{c}_i,v_i\}\}
\nonumber
\end{equation}

Next, we will explain more about the belief function and provide a sample belief function for the end-users.

\textbf{\textit{The Belief Function:}} The belief function of a user is an important factor in determining the optimum bid by that user. One method by which end-users can obtain this belief function is to guess and update their belief for the cell condition by monitoring the price bid by users and channel qualities of different users for a particular interval of time. This update process is beyond the scope of this paper. 

As an another method, the belief function for the cell condition can be provided for end-users by the BS at the beginning of a time-slot. Note that the BS is aware of all channel qualities, but is not aware of the bids submitted by users before announcing the belief function. Thus, the BS can only announce an approximation of $\eta^*$ as the belief function to end-users.

An end-user expects that the belief function satisfies some properties. We call such a function a  \textit{consistent belief function}:

\begin{defn} A consistent belief function is a function which is non-decreasing with respect to $c_i$, and is increasing if the user believes that she will receive a positive power quota and the demand of the user is not met, i.e, when   $0<\log_2(1+q_i(\frac{c_i}{\eta_i(c_i)}-\frac{1}{q_i}))< b_i$. 
\end{defn}
 
The reason for consistency of such belief is discussed in the following discussion:

\textbf{\textit{Discussion:}}
Given the belief finction, $\eta_i(c_i)$, and from \eqref{equ:optcns3} and \eqref{eq:optimum}, $\sum_{j}{(\frac{c_j-\gamma_j^*}{\eta_i(c_i)}-\frac{1}{q_j})^+}=1$. Suppose that $c_j$ for $j\neq i$ is held constant and $c'_i<c''_i$. Let $\eta'^*$ and $\eta''^*$ to be the belief of the user $i$ about the optimum dual variables associated with $c'_i$ and $c''_i$, respectively. 

The claim follows by vacuity if $c'_i=\bar{c}_i$, i.e. $\log_2(1+q_i(\frac{c_i}{\eta_i(c_i)}-\frac{1}{q_i})^+)= b_i$ for the price $c'_i$. 

Trivially, if $(\frac{c_i-\gamma_i^*}{\eta_i(c_i)}-\frac{1}{q_i})^+=0$ for $c_i=c'_i$  and $c_i=c''_i$, then the belief about the optimal allocation is unchanged. Therefore $\eta'^*=\eta''^*$. 

%If $\log_2(1+q_iP^*_i)= b_i$ for the price $c'_i$, i.e. hitting the maximum throughput, $P^*_i$ remain constant by increasing $c_i$ since the desired throughput is met. Thus $\gamma_i^*>0$ is such that the difference $c_i-\gamma^*_i$ is a constant function of $c_i$. Therefore $\eta'^*=\eta''^*$.

Now consider the case that $0<\log_2(1+q_i(\frac{c_i}{\eta_i(c_i)}-\frac{1}{q_i}))< b_i$ for $c_i=c'_i$. Take $c''_i=c'_i+\epsilon$, where $\epsilon>0$ is small enough to ensure that $\frac{c_i}{\eta_i(c_i)}-\frac{1}{q_i}$ does not exceed $b_i$. Suppose that $\eta_i(c_i)$ is not a strictly increasing function of $c_i$. Therefore, $\eta'^*\geq \eta''^*$, and $P'^{belief}_i=(\frac{c'_i}{\eta'^*}-\frac{1}{q_i})^+ <(\frac{c''_i}{\eta''^*}-\frac{1}{q_i})^+=P''^{belief}_i$. In other words, the belief of the user about her power quota is increased. In addition the power quota of other users is unchanged or increased, i.e. $P'^{belief}_j\leq P''^{belief}_j$ for $j\neq i$. Thus $\sum_j {P''^{belief}_j}>1$, which is inconsistent with the belief of a user about the optimization done in the BS side.\qed

Note that the belief function is determined by the history or is provided by the BS and potentially can be any function. However, as discussed above, a function with above-mentioned properties is more consistent. In the following theorem, we present the relation between the price a user bids and her belief about the power quota she gets, when users are provided with a consistent belief. \\

\begin{theorem}

When users are provided with a consistent belief, the belief of  a user about the power allocated to her is a non-decreasing function of the price she bids for prices in the feasible interval, and it is an increasing function of the price she bids when the user believes that she will receive a positive power quota and the demand of the user is not met.
\end{theorem}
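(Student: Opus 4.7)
Plugging the belief $\eta_i(c_i)$ for $\eta^*$ into the optimum allocation formula of Theorem~1, user $i$'s belief about her own allocated power is
\[
P_i^{belief}(c_i)=\left(\frac{c_i-\gamma_i^*}{\eta_i(c_i)}-\frac{1}{q_i}\right)^+.
\]
My plan is to split the feasible interval $(0,\min\{\bar{c}_i,v_i\}]$ into three sub-ranges according to the value of $P_i^{belief}(c_i)$: (i) the low range, where $P_i^{belief}(c_i)=0$; (ii) the middle range, where $0<\log_2(1+q_i P_i^{belief}(c_i))<b_i$, in which case $\gamma_i^*=0$ and $P_i^{belief}(c_i)=c_i/\eta_i(c_i)-1/q_i$; and (iii) the saturated range, $c_i\geq\bar{c}_i$, where $P_i^{belief}(c_i)=(2^{b_i}-1)/q_i$ is pinned by constraint \eqref{equ:optcns1}.

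In regions (i) and (iii) the function is constant, and at the interfaces between adjacent regions it jumps upward by construction (by the definitions of $\bar{c}_i$ and of the positivity threshold), so the whole statement reduces to establishing strict monotonicity inside region (ii). For that step I would fix $c'_i<c''_i$ both in region (ii) with corresponding beliefs $\eta'=\eta_i(c'_i)$ and $\eta''=\eta_i(c''_i)$, and argue by contradiction: if $P_i^{belief}(c''_i)\leq P_i^{belief}(c'_i)$, then $c''_i/\eta''\leq c'_i/\eta'$, and since $c''_i>c'_i$ this forces $\eta''>\eta'$. I would then invoke the aggregate budget $\sum_j P_j^{belief}=1$, which every consistent belief must respect, together with the fact that each $P_j^{belief}$ for $j\neq i$, being of the form $((c_j-\gamma_j^*)/\eta-1/q_j)^+$ with the remaining bids held fixed, is non-increasing in $\eta$. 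The strict increase of $\eta$ therefore weakly shrinks every summand with $j\neq i$, while by hypothesis the $i$-th summand also weakly decreases, producing a total strictly below $1$ and contradicting the budget. This is essentially the line of reasoning already sketched in the Discussion preceding the theorem, which I would adapt directly.

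The delicate step I anticipate is controlling the dual variables $\gamma_j^*$ for $j\neq i$, since a user can switch between saturated ($\gamma_j^*>0$, $P_j^{belief}$ pinned at $(2^{b_j}-1)/q_j$) and unsaturated ($\gamma_j^*=0$, $P_j^{belief}=(c_j/\eta-1/q_j)^+$) as $\eta$ changes. I would handle this by partitioning the other users into three classes according to their status at $\eta'$ and $\eta''$: saturated at both levels, where the summand is pinned and contributes no change; unsaturated at both levels, where the explicit formula yields a weak decrease that is strict when the allocation is positive; and switching users, where an increase in $\eta$ can only move a user from saturated to unsaturated, again producing a weak decrease. In every case the total shrinks, closing the contradiction and establishing strict increase of $P_i^{belief}$ throughout region (ii); combined with the case analysis above, this completes the proof.
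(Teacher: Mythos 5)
Your proposal follows essentially the same route as the paper's proof: the same three-way case split (zero belief power handled via the constraint $P_i\geq 0$, saturated demand handled by vacuity, and the interior regime $0<\log_2(1+q_iP_i^{belief})<b_i$), and the same contradiction obtained from the unit power budget $\sum_k P_k^{belief}=1$ once $\eta$ is shown to increase. If anything, your three-class bookkeeping of the dual variables $\gamma_j^*$ for $j\neq i$ is more careful than the paper's proof, which simply asserts that $P_j^{belief}$ decreases for $j\neq i$ without addressing saturated or zero-power users; otherwise the two arguments coincide.
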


\begin{proof}
If $(\frac{c_i}{\eta_i(c_i)}-\frac{1}{q_i})^+=0$, the result follows from \eqref{equ:optcns2}. The theorem holds by vacuity if  $\log_2(1+q_i\frac{c_i}{\eta_i(c_i)}-\frac{1}{q_i})= b_i$. Now consider the case that $0<\log_2(1+q_i(\frac{c_i}{\eta_i(c_i)}-\frac{1}{q_i}))<b_i$. Suppose that $P^{belief}_i=\frac{c_i}{\eta_i(c_i)}-\frac{1}{q_i}$ is not increasing with respect to $c_i$ . The demand is consistent, therefore $\eta_i(c_i)$ is increasing function of $c_i$. Thus, $P^{belief}_j$ for $j\neq i$ decreases as $c_i$ increases. Therefore $\sum_k {P^{belief}_k}$  decreases from one. This is a contradiction with \eqref{equ:optcns3}. The result follows.\\         
\end{proof}

Next, we will present an example for the belief of end-users about the cell condition and characterize the optimum bidding strategy for this belief.

\begin{exm}\textit{Simple Belief-}
%Suppose a scenario in which all players share the same belief about other end-users. 
Let the set $Q_i$ denote the set of users that end-user $i$ believes that  they receive a positive throughput. Note that in this simple belief function, $Q_i$ is assumed by $i$ to be independent of $c_i$. In general, $Q_i$ would be a set that is decreasing in size with respect to $c_i$.  We introduce  parameters $C_i=\sum_{j\in Q_i}{\hat{c}_{ij}}$ and $B_i=\sum_{j\in Q_i}{\frac{1}{\hat{q}_{ij}}}$, where $\hat{c}_{ij}$ and $\hat{q}_{ij}$ are the belief of user $i$ about other user's price and channel quality. Using \eqref{equ:optcns3}, for all $i$,
$$
\frac{c_i}{\eta_i(c_i)}-\frac{1}{q_i}+\sum_{j\in Q_i}{(\frac{\hat{c}_{ij}}{\eta_i(c_i)}-\frac{1}{\hat{q}_{ij}})}=1
$$
Therefore the belief function of the user $i$ is:
\begin{equation}
\eta_i(c_i)=\frac{c_i+\sum_{Q_i}{\hat{c}_{ij}}}{1+\frac{1}{q_i}+\sum_{Q_i}{\frac{1}{\hat{q}_{ij}}}}=\frac{c_i+C_i}{1+\frac{1}{q_i}+B_i}
\end{equation} 
With this belief function, $\frac{d^2U_i(c_i)}{dc^2_i}<0$ for all feasible $c_i$. This implies that \eqref{equ:diff} has at most one solution in the feasible interval. The solution to the equation \eqref{equ:diff} is the unique optimum bidding strategy for end-user $i$ if it is  in the interval $(0,\min\{\bar{c}_i,v_i\})$; otherwise, user $i$ bids $\min\{\bar{c}_i,v_i\}$. 

A possible implementation scheme for this type of belief function is that the BS announces $C_i$ and $B_i$ at the beginning of each time-slot to the end-user $i$. Subsequently users use these value to calculate their bids. 

\end{exm}

In the next section, we present some numerical results using the \textit{simple belief function} for the cell condition. 

\section{Numerical Results} \label{section:simulation}
First we consider a cell consists of a BS and four end-users. We assume that the BS  is aware of $q_i$'s and $v_i$'s  and announces parameters $C_i$ and $B_i$ to each user $i$ as follows:
\begin{eqnarray}
C_i=\frac{1}{2}\sum_{j\neq i}{v_j}\nonumber \\
B_i=\sum_{j\neq i}{\frac{1}{q_j}}\nonumber \\
\nonumber
\end{eqnarray}

The goal of the first set of simulation is to observe the relationship between parameters of a user, i.e. $b_i$, $q_i$, and $v_i$, with the price she bids and the power allocated to her. We consider three cases. In each case, one of the previously mentioned parameters are different among users and others are the same. Note that we normalized the bandwidth and the power budget to one; therefore all other parameters should be scaled appropriately.

\begin{figure}[t]
    \centering
    \includegraphics[scale=0.35]{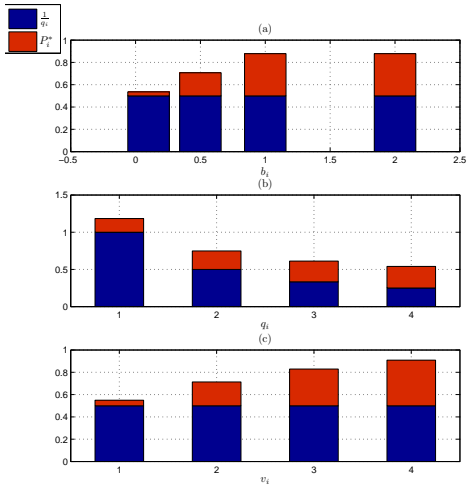}
    \caption{Generalized Water-Filling for Different Users}
    \label{fig:GWF}
\end{figure}

\textit{Case 1:} In this case, $v_i=1$ and $q_i=2$ for all end-users. The vector of the desired throughputs is $\vec{b}=[0.1,0.5,1,2]$. The price vector bid by users can be found to be $\vec{c}=[0.33,0.46,0.53,0.53]$ which is non-decreasing with respect to the throughput needed by a user. The allocated power (red part), the reverse of channel quality (blue part), and $\frac{c_i}{\eta^*}$ (sum of the blue and red parts) are plotted in Figure \ref{fig:GWF}.a. Simulation confirms  that since all users have the same quality of channel, higher price bid by a user  is associated with more power allocated to that user, and consequently a higher throughput. Thus, users with higher level of desired throughout, bid higher prices to secure higher throughputs. 

\textit{Case 2:} Now consider $v_i=1$ and $b_i=1.5$ for all users. The vector of channel qualities is $\vec{q}=[1,2,3,4]$. The vector of price bid by end-users in this case is $\vec{c}=[0.83,0.52,0.43,0.38]$ which is decreasing 
with respect to the channel quality of a user. In addition, the vector of final throghputs is $\vec{T}=[0.24,0.58,0.88,1.11]$ which is increasing by $q_i$. We can see, in Figure \ref{fig:GWF}.b, that the allocated power to different users is approximately equal for all users in spite of difference in channel qualities. The reason is because of differential pricing scheme.

\begin{figure}[t]
    \centering
    \includegraphics[scale=0.35]{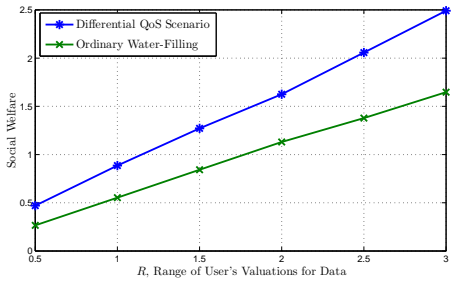}
    \caption{The comparison of social welfare in two scenarios in different ranges of user's valuation}
    \label{fig:SW}
\end{figure}

\textit{Case 3:} In this case, consider $b_i=1.5$ and $q_i=2$ for all users. The vector of valuations is $\vec{v}=[1,2,3,4]$. The vector of price bid by each user is  $\vec{c}=[0.94,1.22,1.42,1.55]$ which is increasing with respect to the valuation each user has for data. As we can see in Figure \ref{fig:GWF}.c, since users with higher valuation for data bid more generously, they obtain a higher fraction of the BS power budget. 

%In sum, in these three cases, the price bid by a user is non-decreasing with respect to the desired throughput, decreasing with respect to the channel quality, and increasing with respect to the valuation for data. 

In another scenario, we want to investigate the effect of this differential QoS scheme on the social welfare of the cell. Here, the social welfare is defined as the sum of the payoffs of all end-users:
\begin{equation}
SW=\sum_i {U_i(c_i,\vec{c}_{-i})}=\sum_i {(v_i-c_i)T_i(c^*_i,\vec{c}_{-i})}
\nonumber
\end{equation} 

We consider a cell with one BS and 10 end-users. The belief function is as before. The valuation of users for data is distributed uniformly in the interval $[0,R]$, and the simulation is done for $R=[0.5,1,1.5,2,2.5,3]$. We consider $b_i=1.5$ and $q_i=2$ for all users. In order to alleviate the effect of randomness in results, simulation is repeated $50$ times for each value of $R$. The reported social welfare is the average of social welfare values resulted in these $50$ iterations.

Two cases are investigated: (i) Differential pricing and QoS scenario. In this case, $c_i$ is the the optimum bidding price chosen by the end-user $i$. (ii) Ordinary Water-Filling Scheme in which the BS allocates her power merely based on the channel quality of users and charges all users a flat rate of $c$. For the sake of comparison, we consider $c$ to be the average price chosen by users in scenario (i). 

Results are plotted in figure \ref{fig:SW}. Results reveal that the social welfare in the differential QoS scenario is larger than the social welfare in ordinary water-filling. Furthermore, the difference between the social welfare of two cases is increasing with the range of user's valuation for data.  

\section{Conclusion and Future Works}\label{section:conclusion}
We studied the problem of power allocation in the downlink of an FDMA system by a BS that decides on the allocation based on the price users bid, their desired throughput, and their channel qualities. The optimum power allocation and bidding strategies were characterized.  

In the future, we will extend our results to the CDMA-SINR frameworks
(where the bandwidth is fully shared). This scenario have a similar form of throughput function to that of
the SNR-FDMA framework we have studied. For the CDMA case
under overload conditions ($\sum_i P_i = 1$),
\[
T_i(\vec{c}) ~=~ \log_2(1+{\sf SINR}_i) ~=~ -\log_2(1-\tilde{q}_i P_i)
\]
where $\tilde{q}_i=h_i/(h_i+N_i)=q_i/(1+q_i)$, which is
increasing in $i$'s channel quality, $q_i$.

\end{document}